\newtheorem{thm}{Theorem}
\newtheorem{defi}[thm]{Definition}
\newtheorem{prop}[thm]{Proposition}
\begin{document}
\large
\renewcommand{\baselinestretch}{1}
\normalsize
\renewcommand{\baselinestretch}{1}
\parindent .5cm

\renewcommand{\baselinestretch}{2}
\normalsize
\renewcommand{\baselinestretch}{2}

\markright{
     Schoenberg, Hoffmann, and Harrigan. A recursive point process model for infectious diseases.
     }

\begin{center} {\bf A recursive point process model for infectious diseases.}\\[.5in]
\end{center}

\begin{center}
Schoenberg, Frederic P.$^1$

Hoffmann, Marc.$^2$ 

Harrigan, Ryan.$^3$

\end{center}

$^1$ Department of Statistics,
University of California, Los Angeles, CA 90095--1554, USA.\\

\hspace{1.65in} phone:  310-794-5193

\hspace{1.65in} fax: 310-206-5658

\hspace{1.65in} email: frederic@stat.ucla.edu

\hspace{1.65in} Postal address:  UCLA Dept.\ of Statistics

\hspace{2.8in} 8142 Math-Science Building

\hspace{2.8in} Los Angeles, CA 90095--1554, USA.

$^2$ Universit\'e Paris-Dauphine, PSL Research University, CNRS, Ceremade, 75016 Paris, France

$^3$ Institute of the Environment and Sustainability,
University of California, Los Angeles, CA 90095-1554, USA.\\

\pagebreak

{\bf Abstract.} We introduce a new type of point process model to describe the incidence of contagious diseases.
The model is a variant of the Hawkes self-exciting process and exhibits similar clustering but without the restriction that the component describing the contagion must remain static over time.
Instead, our proposed model prescribes that the degree of contagion (or productivity) changes as a function of the conditional intensity; of particular interest is the special case where the productivity is inversely proportional to the conditional intensity.  
The model incorporates the premise that when the disease occurs at very low frequency in the population, such as in the primary stages of an outbreak, then anyone with the disease is likely to have a high rate of transmission to others, whereas when the disease is prevalent in the population, then the transmission rate is lower due to human mitigation actions and prevention measures and a relatively high percentage of previous exposure in the total population.
The model is said to be recursive, in the sense that the conditional intensity at any particular time depends on the productivity associated with previous points, and this productivity in turn depends on the conditional intensity at those points. 
Some basic properties of the model are derived, estimation and simulation are discussed, and the recursive model is shown to fit well to historic data on measles in Los Angeles, California, 
a relevant example given the 2017 outbreak of this disease in the same region. \\

\section{Introduction.}

{\sl Hawkes} self-exciting point processes (Hawkes 1971) are 
a type of branching point process model that has become very commonly used in modeling clustered phenomena. 
For example, versions of Hawkes models are used
to model seismicity (Ogata 1988, 1998), crimes (Mohler et al.\ 2011), 
invasive plants (Balderama et al.\ 2012), 
terrorist strikes (Porter and White 2012), and perturbations in financial markets (Bacry et al.\ 2013 and Bacry et al.\ 2015).\\

Although Hawkes models have some features making them amenable to modeling incidence of infectious diseases,
consideration of the nature of the spread of disease may suggest a somewhat different type of model.
For instance, Hawkes processes have the property that the productivity (the expected number of secondary events triggered directly by the given event, or in the case of infectious disease, the expected number of transmissions from one individual to another) is static. 
In the case of Hawkes models applied to earthquakes (e.g.\ Ogata 1988, Ogata 1998), the basic Hawkes model was extended to allow the productivity of an earthquake to depend on its magnitude, but still not to depend on the time or location of the event, nor on the number of previously occurring events. 
When considering infectious diseases, however, this assumption of static productivity seems questionable. Early in the onset of an epidemic, when prevalence of the disease is still low, one would expect the rate of transmission to be much higher than when the prevalence of the disease is higher, because of human efforts at containment and intervention of the disease, and because some potential hosts of the disease may have already been exposed. Thus, we introduce a new type of point process model where the productivity (expected number of transmissions) for a subject infected at location $(s,t)$ in space-time is a function of the conditional intensity at $(s,t)$. Since the conditional intensity in turn depends critically on this productivity, we call the model {\sl recursive}.\\

Here we present this extension of Hawkes point process models as they apply to infectious diseases in the following format. After a brief review of point processes in general and Hawkes models in particular in Section 2, we introduce the recursive model in Section 3, followed by the derivation of some basic properties of the model in Section 4. Simulation and estimation are discussed in Sections 5 and 6, respectively, and in Section 7 we fit the model to data on recorded cases of measles in Los Angeles, California from 1910 to 1952.  Section 8 contains some concluding remarks.\\

\section{Hawkes point processes.}

A point process (Daley and Vere-Jones, 2003; Daley and Vere-Jones, 2007) is a $\sigma$-finite collection of points $\{\tau_1, \tau_2, ...\}$ 
occurring in some metric space $S$.
While the definitions and results below can be extended quite readily to other spaces, 
we will assume for simplicity throughout that the metric space $S$ is  
a bounded region $B \times [0,T]$ in space-time, with $\mu$ representing Lebesgue measure, 
and we will assume for convenience that the spatial region is scaled so that 
$\mu(B) = 1$. 
With this convention, all our definitions and results apply also to the case of a purely temporal process; 
in such cases one must simply ignore the integral over $B$ in formulae below.
A point process is {\sl simple} if, with probability 1, none of the points overlap exactly. \\

Spatial-temporal point processes are typically modeled via their conditional intensity, 
$\lambda(t)$ or $\lambda(s,t)$, which represents the infinitesimal rate at which points are 
accumulating at location $(s,t)$ of space-time, 
given information on all points occurring prior to time $t$.
Simple spatial-temporal point processes are uniquely characterized by their conditional intensity (Daley and Vere-Jones, 2007); 
for models for non-simple point processes, see Schoenberg (2006). 

For a simple spatial-temporal Hawkes process (Hawkes 1971), the conditional rate  
of events at location $(s,t)$ of space-time, given information ${\mathcal H}_t$ on all events prior to time $t$, can be written
\begin{eqnarray} \label{hawkes}
\lambda (s,t | \mathcal{H}_t) = \mu + K \int \limits _B \int \limits _0 ^t g(s-s',t-t') dN(s',t'),
\end{eqnarray}
where $\mu>0$, is the background rate, $g(v) \geq 0$ is the 
{\sl triggering density} satisfying $\int \limits _B \int_0^\infty g(u,v) du dv = 1$ which 
describes the spatial-temporal conductivity of events,
and the constant $K$ is the productivity, which is typically required to satisfy $0 \leq K < 1$ in order to ensure
stationarity and subcriticality (Hawkes, 1971). \\

Ogata (1988) extended the Hawkes model in order for earthquakes of different magnitudes to have different productivity.
Hawkes models and their extension to the temporal-magnitude case were called {\sl epidemic} by Ogata (1988), 
since they posit that an 
earthquake can produce aftershocks which in turn produce their own aftershocks, etc.
Several forms of the triggering function $g$ have been posited for describing seismological data,
such as 
$g(v) = \frac{1}{(v+c)^{p}},$
where $u$ is the time elapsed since a previous event (Ogata 1988).\\

Hawkes processes have been extended to describe the
space-time-magnitude distribution of seismic events. 
A version suggested by Ogata (1998)
uses a spatially inhomogeneous background rate and circular aftershock regions where the squared distance between
an aftershock and its triggering event follows a Pareto
distribution. 
The model may be written
\begin{equation*}
\lambda(s,t | \mathcal{H}_t)  = \mu(s) + 
K \int \limits _B \int \limits _0 ^t g(s-s', t-t', m') dN(s',t',m'),
\end{equation*}
with triggering function
\begin{equation}\textstyle
g(u,v,m)  =  (||u||^2 + d)^{-q} \exp\{a(m-M_0)\}(v+c)^{-p}, \label{ETAS} 
\end{equation}
where $||s_i - s_j||^2$ represents the squared distance between the epicenters or hypocenters 
${\bf s_i}$ and ${\bf s_j}$ of earthquakes $i$ and $j$, respectively, and 
$d>0$ and $q>0$ are parameters describing the spatial distribution of
triggered earthquakes about their respective mainshocks. \\

The ETAS model has been extended by allowing the parameters to vary spatially and temporally. 
For example, the HIST-ETAS model (Ogata et al.\ 2003, Ogata et al.\ 2004) assumes the parameters in the ETAS model are locally constant within small spatial-temporal cells. Similarly, 
Harte (2014) allows the ETAS model's productivity parameter to vary smoothly in space and time. 
In the following section we extend the model in a different way, allowing the productivity to vary as a function of $\lambda$. \\


\section{Proposed recursive model.}

Consideration of the nature of disease epidemics may lead one to question the usual assumption in Hawkes models of static productivity. 
For instance, when the prevalence of the disease is low or zero in a region, as is the case when the epidemic has never struck before or has not struck in considerable time, then the conditional intensity $\lambda$ is small and one would expect the rate of transmission for each infected person to be quite high, as a majority of hosts are likely immunologically naive, and a carrier of the disease may be expected to infect many others. When the epidemic is at its peak and many subjects have contracted the disease, on the other hand, $\lambda$ is large and 
one might expect the rate of transmission to be lower due to human efforts at containment and intervention of the disease, and because many subjects may have already been exposed and thus might be recovered and immune to further infection, or deceased (in either case no longer part of a susceptible pool). 
These considerations suggest a point process model where the productivity for a subject infected at location $(s,t)$ in space-time is inversely related to the conditional intensity at $(s,t)$. 
Since the conditional intensity in turn depends critically on this productivity, we call the model {\sl recursive}.

We may write this model
\begin{eqnarray}
\lambda(s,t) = \mu + \int \limits _B \int \limits _0 ^t H(\lambda_{s',t'}) \, g(s-s',t-t') dN(s',t'), \label{rec}
\end{eqnarray}
where $\mu > 0$, and $g>0$ is a density function. 
The productivity function $H$ should typically be decreasing in light of the considerations above regarding the transmission of disease,
and we focus in particular in what follows on the case where $H(x) = \kappa x^{- \alpha}$, 
with $\kappa > 0$, so that 
\begin{eqnarray}
\lambda(s,t) = \mu+ \kappa \int \limits _B \int \limits _0 ^t \lambda_{s',t'}^{-\alpha} \, g(s-s',t-t') dN(s',t'). \label{rec2}
\end{eqnarray}
The triggering density $g$ may be given e.g.\ by an exponential density,
\begin{eqnarray}
g(u,v) = \beta \exp(-\beta v), \label{exponential}
\end{eqnarray}
or exponential in space and time, 
\begin{eqnarray*}
g(u,v) = \beta_s \beta_t \exp(-\beta_s u - \beta_t v).
\end{eqnarray*}

When $\alpha = 0$, (\ref{rec2}) reduces to a Hawkes process.
We will refer to the special case where $\alpha=1$, i.e.\ where
\begin{eqnarray}
\lambda(s,t) = \mu+ \kappa \int \limits _B \int \limits _0 ^t \frac{g(s-s',t-t')}{\lambda_{s',t'}} \, dN(s',t') \label{oneover}
\end{eqnarray}
as {\sl standard}. 
The standard recursive model has especially simple and attractive features, some of which are described in Section 4.\\

\section{Basic properties of the recursion model.}

We prove the existence of a simple point process with conditional intensity \eqref{rec2}, and find the mean, variance, and certain large sample properties of the process. \\

\noindent {\bf Existence.} \\ 

\begin{prop} \label{existence}
Given a complete probability space, a recursive model with conditional intensity satisfying \eqref{rec2} can be constructed with $H(x)=\kappa x^{-\alpha}$, for any $\alpha, \kappa >0$.
\end{prop}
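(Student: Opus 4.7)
My plan is to construct the process pathwise by a thinning scheme and then to control it by coupling with an ordinary Hawkes process. On the given complete probability space, let $\Pi$ be a Poisson random measure on $B \times [0,T] \times [0,\infty)$ of unit intensity (which exists by standard constructions). Sort the atoms of $\Pi$ by their time coordinate and sweep through them one at a time. At the candidate atom $(s^*,t^*,u^*)$, declare $(s^*,t^*)$ to be an event of the process if and only if $u^* \leq \lambda(s^*,t^*-)$, where
\[
\lambda(s,t-) \;=\; \mu \;+\; \sum_{i\,:\,t_i<t}\, \kappa\, \lambda(s_i,t_i-)^{-\alpha}\, g(s-s_i,\,t-t_i)
\]
is computed from the events $(s_i,t_i)$ accepted so far. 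The recursion is well defined point by point, because the productivity $\kappa\,\lambda(s_i,t_i-)^{-\alpha}$ attached to the $i$-th accepted event is fixed at the moment it is born and never revised.

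The main obstacle is to rule out explosion, i.e.\ to show that only finitely many atoms are accepted on $B \times [0,T]$ almost surely. This is where the hypothesis $\alpha>0$ is essential. Since $\lambda(s,t-) \geq \mu > 0$ for every $(s,t)$, the productivity satisfies the uniform bound $H(\lambda) \leq \kappa\mu^{-\alpha} =: K^*$. Running the same thinning on the same $\Pi$ but replacing $\kappa\lambda(s_i,t_i-)^{-\alpha}$ by the constant $K^*$ defines an ordinary Hawkes process with background $\mu$, productivity $K^*$, and triggering density $g$. A straightforward induction on the ordered candidates yields the coupling inequality $\lambda^{\mathrm{rec}}(s,t-) \leq \lambda^{H}(s,t-)$ and shows that every recursively accepted atom is also Hawkes-accepted; the inductive step uses that each new accepted event adds at most $K^*\,g(\,\cdot - s^*,\,\cdot - t^*)$ to either intensity. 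Since the dominating Hawkes process is a.s.\ locally finite on the bounded horizon $[0,T]$ by classical existence results (Hawkes 1971; Daley and Vere-Jones 2003), so is the recursive process.

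Once the construction terminates, $\lambda(s,t)$ is left-continuous in $t$ between accepted events and depends only on strictly past accepted atoms, hence it is predictable with respect to $\mathcal{H}_t$. A standard verification for thinned unit-rate Poisson measures then identifies $\lambda$ as the $\mathcal{H}_t$-conditional intensity, yielding \eqref{rec2}; simplicity of the resulting point process is inherited from the a.s.\ distinctness of the time coordinates of the atoms of $\Pi$. The substantive work is the coupling-and-domination step; the remaining verifications are routine.
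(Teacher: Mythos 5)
Your construction is sound in outline and rests on exactly the same key observation as the paper: since $\lambda \geq \mu$ pointwise, the productivity is uniformly bounded by $\kappa\mu^{-\alpha}$. But the mechanism is genuinely different. The paper builds the process by explicitly defining the successive jump times $T_k$ (inverting the integrated intensity against independent variables $e_k$), truncates at $k$ points, and rules out explosion by an expectation bound: $E[N_t^{(k)}] \leq \mu t + \kappa\mu^{-\alpha}\int_0^t g(t-s)E[N_s^{(k)}]\,ds$, closed by a Gronwall-type lemma (Lemma 23 of Delattre et al.\ 2016) uniformly in $k$, then monotone convergence. You instead embed everything in a unit-rate Poisson measure and argue pathwise: the recursive process is, atom by atom, dominated by the Hawkes process with constant productivity $K^* = \kappa\mu^{-\alpha}$ driven by the same noise, so non-explosion is inherited. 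Your coupling induction is correct (acceptance for the recursive process implies acceptance for the dominating Hawkes process, and each accepted atom adds at most $K^* g$ to either intensity), and the pathwise domination is arguably more informative than the expectation bound, since it gives an almost-sure sandwich rather than just $E[N_t]<\infty$.

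Two points need repair before this is airtight. First, you cannot literally "sort the atoms of $\Pi$ by their time coordinate and sweep through them one at a time": a unit-rate Poisson measure on $B\times[0,T]\times[0,\infty)$ has infinitely many atoms in every time slice. The standard fix is to define the next accepted point as the first atom lying below the graph of the current (locally bounded, predictable) intensity, which is well defined up to the explosion time; this is routine but should be said. Second, and more substantively, your dominating Hawkes process has productivity $K^* = \kappa\mu^{-\alpha}$, which will typically exceed $1$ (it does by a wide margin for the parameter values in the paper), so it is supercritical and the "classical existence results (Hawkes 1971)" for stationary subcritical processes do not apply to it. Local finiteness of a supercritical Hawkes process on a bounded horizon is still true, but proving it requires exactly the renewal inequality plus Gronwall argument that the paper carries out. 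So as written you have deferred, rather than discharged, the one step that constitutes the real content of the proposition; citing the non-explosion result for general (possibly supercritical) Hawkes processes, e.g.\ from Delattre et al.\ (2016), or reproducing the Gronwall step for the dominating process, would close the gap.
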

\begin{proof} Let $(e_k)_{k \geq 1}$ be a sequence of independent random variables. Set $T_0=0$ and
$$T_{k+1} = \inf\big\{t>T_k,\;\;\int \limits_{T_k}^{t-}\big(\mu+\kappa\sum_{i=1}^k\lambda^{-\alpha}_{T_i} g(s-T_i)\big)ds = e_{k+1}\big\}.$$
Define, for $k \geq 1$ the sequence of processes
$N_t^{(k)} = \sum_{i=1}^k {\bf 1}_{\{T_i \leq t\}}$. It is easy to see that $N_t^{(k)}$ is a counting process with stochastic intensity $\lambda_t^{(k)}$ satisfying
$$\lambda_t^{(k)} = \mu+\kappa\int \limits_0^{t-}(\lambda_u^{(k)})^{-\alpha}g(t-u)dN_u^{(k)}.$$
Let $N_t=\lim_{k \rightarrow \infty} N_t^{(k)}$. Let us show that $N_t$ is well-defined, {\it i.e.} has no accumulation of jumps. We have
\begin{align*}
E[N_{t}^{(k)}] & = E\big[\int \limits_0^t \lambda_{s-}^{(k)}ds\big] \\
& \leq \mu t + \kappa E\big[\int_0^t \int \limits _0 ^{{(s-)}} \big(\lambda_{u}^{(k)}\big)^{-\alpha} \, g(s-u) dN_{u}ds\big] \\
& \leq \mu t + \kappa \mu^{-\alpha} E\big[\int \limits_0^t \int \limits _0 ^{{(s-)}} g(s-u) dN_{u}^{(k)}ds\big] \\
& = \mu t + \kappa \mu^{-\alpha}E\big[\int_0^t g(t-s) N_{s}^{(k)}ds\big] 
 \end{align*}
where the last line can be obtained for instance by Lemma 22 in Delattre et al.\ 2016. Hence
$$E[N_{t}^{(k)}] \leq  \mu t + \kappa \mu^{-\alpha}\int_0^t g(t-s) E\big[N_{s}^{(k)}\big] ds$$ 
and the function $G_k(t) = E[N_{t}^{(k)}] $ satisfies $G_k(t) \leq \mu t +\kappa \mu^{-\alpha}\int_0^t g(t-s)G_k(s)ds$, for which Gronwall lemma implies $\sup_kG_k(t) \leq \mu t C_t(g)$ for some constant $C_t$ depending on $g,\mu,\alpha$ only as soon as $g$ is locally integrable (see, for instance  Lemma 23(i) in Delattre et al.\ 2016). Letting $k\rightarrow \infty$, we infer by monotone convergence that $E\big[N_t\big] < \infty$ and thus $N_t <\infty$ $P$-almost surely follows. From this, one can observe that the stochastic intensity $\lambda$ of $N$ satisfies the desired equation. The extension to a spatial variable, {\it i.e.} passing from $N(t)$ to $N(t,s)$ and $\lambda(t)$ to $\lambda(t,s)$ satisfying \eqref{rec2} is straighforward.
\end{proof}


{\bf Mean and variance.}

The mean of the recursive process (\ref{rec2}) can be obtained simply by 
using the Georgii-Nguyen-Zessin property of the conditional intensity (Georgii 1976, Nguyen and Zessin 1979).
\begin{eqnarray}
\frac{1}{T} EN(S) &=& \frac{1}{T} E \int \limits _S dN  \nonumber \\
&=& \frac{1}{T} E \int \limits _B \int \limits _0 ^T \lambda_{s,t} \, d\mu(s,t)  \nonumber \\
&=& \frac{1}{T} E \int \limits _B \int \limits _0 ^T 
\{\mu + \kappa \int \limits _B \int \limits _0 ^t \lambda_{s',t'}^{-\alpha} \, g(s-s',t-t') dN_{s',t'}\} d\mu(s,t)  \nonumber \\
&=& \mu + \frac{\kappa}{T} E \int \limits _B \int \limits _0 ^T \int \limits _B \int \limits _0 ^t 
\lambda_{s',t'}^{1-\alpha} \, g(s-s',t-t') d\mu(s,t,s',t')  \nonumber \\
&=& \mu + \frac{\kappa}{T} E \int \limits _B \int \limits _0 ^T \lambda_{s',t'}^{1-\alpha} \, \left 
\{ \int \limits _B \int \limits _0 ^{T-t'} g(s-s',t-t') d\mu(s,t) \right \} d\mu(s',t')  \nonumber \\
&\rightarrow & \mu + \frac{\kappa}{T} E \int \limits _B \int \limits _0 ^T \lambda_{s',t'}^{1-\alpha} \, d\mu(s',t'), \label{meangeneral}
\end{eqnarray}
as $T \rightarrow \infty$,
provided 
\begin{eqnarray}
\lim _{T \rightarrow \infty} \int_B \int \limits _0 ^{T-t'} g(s-s',t-t') ds dt = 1, \,  \forall (s',t'). \label{density}
\end{eqnarray}
If assumption (\ref{density}) is violated then equation (\ref{meangeneral}) is merely an approximation. Impacts of violations to assumption (\ref{density}) are investigated in Schoenberg (2016). 

Note that for the standard recursive model, $\alpha = 1$, and 
(\ref{meangeneral}) reduces simply to 
\begin{eqnarray} \label{recmean}
\mu + \kappa.
\end{eqnarray}
This highlights a major difference between Hawkes models and recursive models. 
For a Hawkes process, doubling the background rate amounts to doubling the total expected number of points, but this is far from true for the recursive process. 
As an example, in the rather realistic simulations in Figure 1a where $\mu = 0.1$ and $\kappa = 2$, 
doubling $\mu$ would only increase the total expected number of points by less than $5\%$, and 
in the case of the process simulated in Figure 1c where $\mu = 0.01$ and $\kappa = 2$, 
doubling $\mu$ would increase the total expected number of points by less than $0.5\%$.\\

{\bf Law of large numbers.}

 We specialise in this section to the case $\alpha=1$ and show that $T^{-1}N_T$ converges to $\mu+\kappa$ as $T \rightarrow \infty$ with rate of convergence $\sqrt{T}$ in $L^2$. For simplicity, we only consider the temporal model $N_t$ with stochastic intensity
 $$\lambda_t=\mu+\kappa \int \limits_0^{t_-}\lambda_s^{-1}g(t-s)dN_s.$$
 
 \begin{prop}
 Assume $\limsup_{T \rightarrow \infty}T^{1/2}\int_T^\infty g(t)dt < \infty$. Then
 $$\sup_TTE\big[\big(T^{-1}N_T-(\mu+\kappa)\big)^2\big]<\infty.$$
 \end{prop}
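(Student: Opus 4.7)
The plan is a bias--variance decomposition combined with a martingale representation of the compensator. Let $M_t = N_t - \int_0^t \lambda_s\,ds$ be the compensated counting martingale, so that $\langle M\rangle_t = \int_0^t \lambda_s\,ds$. Substituting $dN_s = \lambda_s\,ds + dM_s$ into the intensity equation and applying Fubini to the resulting double integral, I would first derive the identity
\begin{equation*}
\int_0^T \lambda_t\,dt = \mu T + \kappa\int_0^T G(T-s)\,ds + \kappa\int_0^T \lambda_s^{-1}G(T-s)\,dM_s, \qquad G(u) := \int_0^u g(v)\,dv.
\end{equation*}
Because $\lambda_s\geq\mu$ deterministically, the integrand $\lambda_s^{-1}G(T-s)$ is bounded by $\mu^{-1}$, and since $E[\langle M\rangle_T]=E[N_T]<\infty$ by Proposition \ref{existence}, the stochastic integral is a genuine mean-zero $L^2$-martingale.

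Taking expectations in the identity and setting $\bar G(u) = 1 - G(u) = \int_u^\infty g(v)\,dv$ gives $E[N_T] = T(\mu+\kappa) - \kappa\int_0^T \bar G(u)\,du$. The hypothesis $\limsup_{T\to\infty}T^{1/2}\bar G(T)<\infty$ implies $\bar G(u)\leq C u^{-1/2}$ for large $u$, hence $\int_0^T\bar G(u)\,du = O(T^{1/2})$, and therefore $T^{-1}(E[N_T]-T(\mu+\kappa))^2 = O(1)$.

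For the stochastic fluctuation, the same identity yields $N_T - E[N_T] = M_T + \kappa\int_0^T \lambda_s^{-1}G(T-s)\,dM_s$. The $L^2$ triangle inequality and the isometry for stochastic integrals against $M$ give
\begin{equation*}
\mathrm{Var}(N_T) \leq 2\,E[M_T^2] + 2\kappa^2 E\Bigl[\int_0^T \lambda_s^{-2}G(T-s)^2\,d\langle M\rangle_s\Bigr] \leq 2\,E[N_T] + 2\kappa^2\mu^{-1}\int_0^T G(T-s)^2\,ds,
\end{equation*}
and each term is $O(T)$ since $E[N_T]\leq(\mu+\kappa)T$ and $0\leq G\leq 1$. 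Combining with the bias bound via $T\,E[(T^{-1}N_T-(\mu+\kappa))^2] \leq 2T^{-1}\mathrm{Var}(N_T) + 2T^{-1}(E[N_T]-T(\mu+\kappa))^2$ closes the argument.

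The main obstacle is the rigorous derivation of the displayed identity: the Fubini swap and the substitution of $\lambda\,ds + dM$ for $dN$ must be justified by the integrability from Proposition \ref{existence}, and one must check that $\int_0^T\lambda_s^{-1}G(T-s)\,dM_s$ is centred rather than merely a local martingale. The uniform bound $\lambda^{-1}\leq\mu^{-1}$ together with $E[\langle M\rangle_T]<\infty$ resolves both issues. Notably, the tail assumption on $g$ is used only to control the bias; the variance bound requires nothing beyond $g\in L^1$ and boundedness of $G$.
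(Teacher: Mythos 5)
Your proof is correct and follows essentially the same route as the paper: your decomposition into the compensated-counting martingale $M_T$, the stochastic integral $\kappa\int_0^T\lambda_s^{-1}G(T-s)\,dM_s$, and a deterministic bias of order $T^{1/2}$ coincides term-by-term with the paper's $A_T$, $B_T^{(1)}$ and $B_T^{(2)}$ (your stochastic integral equals the paper's $\int_0^T g(T-s)(\widetilde N_s-s)\,ds$ after integration by parts, where $\widetilde N_s=\int_0^{s-}\lambda_u^{-1}dN_u$). The only differences are cosmetic: you center at $E[N_T]$ rather than at the compensator, and you use the It\^o isometry with the bound $\lambda^{-1}\le\mu^{-1}$ where the paper uses Jensen's inequality, which incidentally sidesteps the paper's slightly imprecise claim that $(\widetilde N_s-s)^2$ has compensator $s$ (it is $\int_0^s\lambda_u^{-1}du\le s/\mu$).
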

 \begin{proof}
Write $T^{-1}N_T-(\mu+\kappa) = A_T+B_T$, with
$$A_T = T^{-1}N_T-\frac{1}{T}\int \limits_0^T \lambda_sds\;\;\text{and}\;\;B_T=\frac{1}{T}\int \limits_0^T \lambda_sds-(\mu+\kappa).$$
We claim that both $\sup_TTE[A_T^2]<\infty$ and $\sup_TTE[B_T^2]<\infty$, from which the proposition readily follows.
Let us first consider the term involving $B_T$. We have 
\begin{align*}
 B_T & = \mu+\kappa \frac{1}{T}\int \limits_0^T g(T-s) \int \limits_0^{s-} \frac{dN_u}{\lambda_u} ds - (\mu+\kappa)\\
  & = \kappa \big(\frac{1}{T}\int \limits_0^T g(T-s) \widetilde N_sds - 1\big)\\
  & = \kappa \frac{1}{T}\int \limits_0^T g(T-s) (\widetilde N_s-s)ds + \kappa\big(\frac{1}{T}\int \limits_0^T g(T-s)sds-1\big) \\
  & = B_T^{(1)}+B_T^{(2)},
  \end{align*}
say, where $\widetilde N_s = \int \limits_0^{s-} \frac{dN_u}{\lambda_u} ds$. Clearly
\begin{align*}
(\frac{1}{T}\int \limits_0^T g(T-s)sds-1 & = -\frac{1}{T}\int \limits_0^Tg(s)ds+\int \limits_0^Tg(s)ds-1 \\
& = -\frac{1}{T}\int \limits_0^Tg(s)ds+\int \limits_T^\infty g(s)ds 
\end{align*}
and this (deterministic) term is $O(T^{-1/2})$ by assumption and thus $B_T^{(2)}$ has the right order. As for $B_T^{(1)}$, since $s \mapsto g(T-s)$ is a probability density, we successively use Jensen's inequality, Fubini, the fact that $\widetilde N_s$ is a martingale with predictable compensator $s$, hence $(\widetilde N_s-s)^2$ itself also a martingale with predictable compensator $s$ to obtain
\begin{align*}
E\big[\big(B_T^{(1)}\big)^2\big] & \leq \kappa^2 \frac{1}{T^2}\int \limits_0^T g(T-s) E\big[(\widetilde N_s-s)^2\big]ds \\
& = \kappa^2 \frac{1}{T^2}\int \limits_0^T g(T-s) E\big[\langle \widetilde N_\cdot -\cdot\rangle_s\big]ds \\
& =  \kappa^2 \frac{1}{T^2}\int \limits_0^T g(T-s) sds \\
& = \kappa^2  \frac{1}{T^2}\int \limits_0^T g(T-s) sds
\end{align*}
and this term mutiplied by $T$ is negligible, as for the term $B_T^{(2)}$. We finally turn to the important term $A_T$. Since $N_t-\int \limits_0^t \lambda_sds$ is a martingale, its predictable compensator is also $\int \limits_0^t \lambda_sds$. It follows that
\begin{align*}
E\big[A_T^2\big] & = T^{-2} E\big[\big(N_T-\int_0^T\lambda_sds\big)^2\big] \\
& = T^{-2}E\big[\langle N_\cdot - \int \limits_0^\cdot \lambda_s ds\rangle_T\big] \\
& = T^{-2}\int_0^TE\big[\lambda_s\big]ds.
\end{align*}
The remainder of the proof consist in showing that $\sup_{s>0}E[\lambda_s]<\infty$. This follows in the same line as for the proof of non accumulation of jumps in Proposition \ref{existence}.
 \end{proof}

{\bf Productivity.}

The productivity of a point $\tau_i$ is typically defined in the context of Hawkes or ETAS processes as the expected number of first generation offspring of the point $\tau_i$. For a Hawkes process, the productivity of each point is simply $K$.

In the case of the recursive model (\ref{rec}), the productivity of any point $\tau_i$ is given by $H\{\lambda(\tau_i)\}$. Thus the total productivity, for $n$ points $\tau_1, \tau_2, ..., \tau_n$, is $\sum \limits _{i=1}^n H\{\lambda(\tau_i)\}$, and for the special case 
of the standard recursive model (\ref{oneover}), 
the expected value of the total productivity is 
\begin{eqnarray*}
\kappa E \int \limits _B \int \limits _0 ^T \frac{1}{\lambda_{s,t}} \, dN_{s,t} = \kappa E \int \limits _B \int \limits _0 ^T 
\frac{1}{\lambda_{s,t}} \lambda_{s,t} \, d\mu = \kappa T.
\end{eqnarray*}
Thus under assumption (\ref{density}) the average productivity for the standard recursive model is $\kappa T / N(S) \rightarrow \kappa / (\mu + \kappa)$ a.s., since $N(S)/T \rightarrow \mu + \kappa$ a.s. 
This highlights another difference between the recursive and Hawkes models. 
For a Hawkes process, the points $\tau_1, \tau_2, ...$ all have constant productivity, $K$. For a standard recursive process, 
the productivity of the first point is very large ($1 / \mu$), but the productivity decreases thereafter, ultimately averaging
$\kappa / (\mu + \kappa)$.\\

%
%

{\bf Declustering.} 

In the seismological context, one is often interested in {\sl mainshocks}, and it can occasionally be desirable to remove the earthquakes that could be considered {\sl aftershocks} from a catalog.
Zhuang et al.\ (2002) proposed a method of {\sl stochastic declustering} for Hawkes or ETAS processes whereby one 
may assign to each observed point $\tau_i$ a probability that it was mainshock, attributable to the background rate $\mu$, and 
to each pair of points $(\tau_i, \tau_j)$ one may compute the probability that earthquake $j$ was {\sl triggered} by 
earthquake $i$, and may thus be considered an {\sl aftershock} of event $i$.

Similarly, when discussing the spread of a contagious disease in a given spatial region, one may consider 
the probability that events generated by the recursive model (\ref{rec}) are new outbreak points, attributable to the background rate $\mu$, or whether point $\tau_j$ was infected by point $\tau_i$. Such triggering or infection probabilities would be extremely relevant to a statistical analysis of epidemic data.

Fortunately these background and infection probabilities are very easy to calculate for the recursive model.
Whereas in a sub-critical Hawkes process, the expected proportion of background points is $1/(1-K)$, 
for the standard recursive process, this proportion is $\mu / (\mu + \kappa)$. 
This follows directly from the formula (\ref{recmean}) for the mean of the recursive process.
Referring to the form of the recursive model in (\ref{rec}),
for any points $\tau_i < \tau_j$, the probability that subject $j$ was infected by subject $i$ is given by
\begin{eqnarray}
\frac{H(\lambda_{\tau_i}) g(\tau_j - \tau_i)}{\mu + \int \limits _B \int _0 ^{\tau_j} H(\lambda_{s',t'}) \, g\{\tau_j-(s',t')\} dN_{s',t'}} \label{triggering},
\end{eqnarray}
which can readily be computed. In equation (\ref{triggering}) we are using the simplified notation 
$g(\tau_j - \tau_i)$ to refer to $g(s-s',t-t')$, where $\tau_j = (s,t)$ and $\tau_i = (s',t')$.\\

\section{Simulation.}

One way to simulate a recursive point process is using the thinning technique of Lewis and Shedler (1979).
Specifically, one sets $b$ to some large value,
generates a homogeneous Poisson process of {\sl candidate} points with rate $b$ on the spatial-temporal domain $S$,
sorts the candidate points in order of time, and for each candidate point $\tau_i$, for $i = 1, 2, ...,$,
one keeps the point independently of the others with probability
$\lambda(\tau_i)/b$.
Here, $\lambda(\tau_i)$ is computed using equation (\ref{rec}), where in calculating  
\begin{eqnarray*}
\lambda(s,t) = \mu + \int \limits _B \int \limits _0 ^t H(\lambda_{s',t'}) \, g(s-s',t-t') dN(s',t') = 
\mu + \sum \limits _{i: t_i'<t} H(\lambda_{s',t'}) \, g(s-s',t-t'),
\end{eqnarray*}
the sum is taken over only the {\sl kept} points.
Hawkes processes may be simulated in a similar manner. \\ 

\begin{figure}[hhh]
\includegraphics[width=6in,height=1.2in]{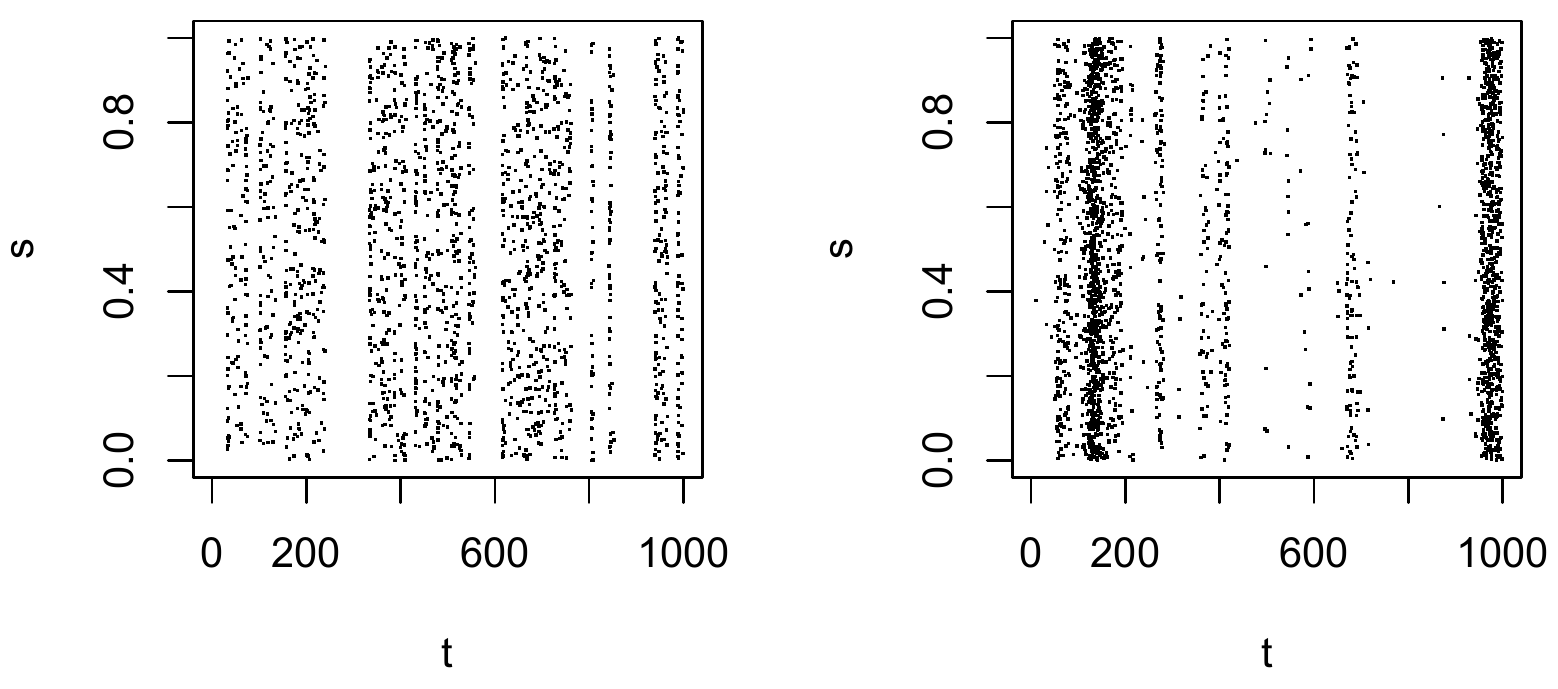}
\includegraphics[width=6in,height=1.2in]{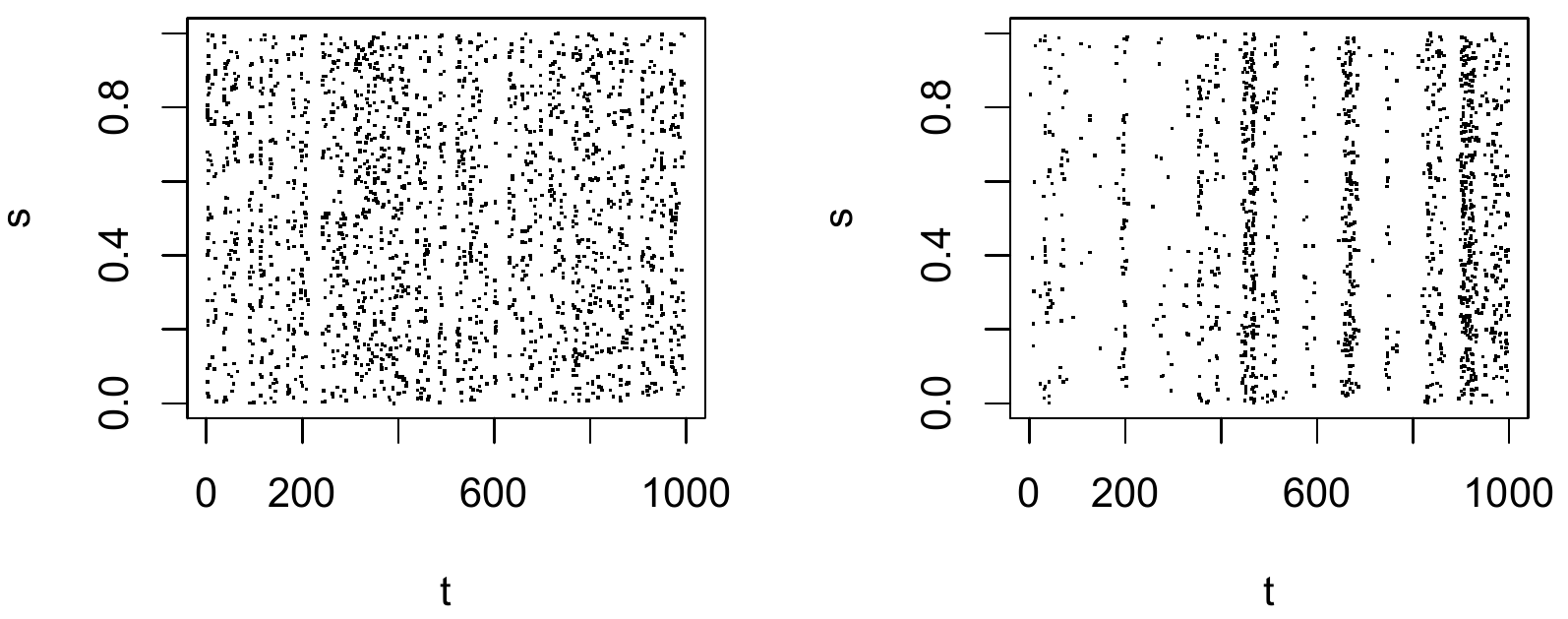}
\caption{(a) Simulation of a standard recursive model (\ref{oneover}) with $\mu = 0.05$, $\kappa = 2$, and $g$ as in (\ref{exponential}) with $\beta_t = 0.8$.
(b) Simulation of a Hawkes model (\ref{hawkes}) with the same $g$ and $\mu$ as in (a), and with $K = \mu/(\mu+\kappa)$ so that the processes in (a) and (b) have the same expected number of points.
(c) Simulation of a standard recursive model (\ref{oneover}) with $\mu = 0.1$, $\kappa = 2$, and $g$ as in (\ref{exponential}) with $\beta_t = 1$.
(d) Simulation of a Hawkes model (\ref{hawkes}) with the same $g$ and $\mu$ as in (c), and with $K = \mu/(\mu+\kappa)$ so that the processes in (c) and (d) have the same expected number of points.
All 4 simulations are over the same domain $S = B \times [0,T]$ with $B = [0,1]$ and $T=1000$. 
For all 4 simulations, points are distributed spatially uniformly in $B$.}
\end{figure}


\begin{figure}[hhh]
\includegraphics[width=6in,height=2in]{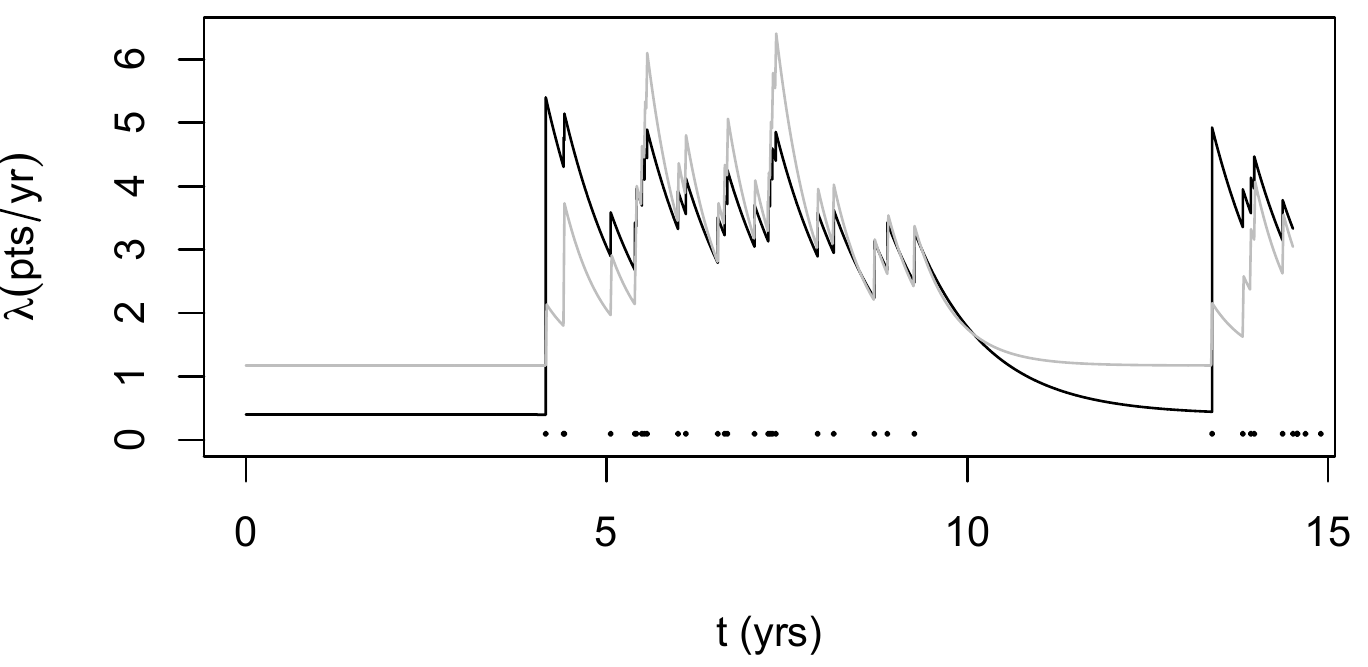}
\caption{The conditional intensity for a simulation of the standard recursive model (black)
and the corresponding conditional intensity for a Hawkes model (grey) fit by maximum likelihood to the same simulated recursive process. Points are shown on the $x$ axis.
The simulated recursive model has $\mu = 0.4$, $\kappa = 2$, $g$ as in (\ref{exponential}) with $\beta_t = 1$, and $T = 1000$, and the Hawkes model also has $g$ as in (\ref{exponential}) and parameters $\mu, K,$ and $\beta_t$ fit by MLE. Points are distributed uniformly over $B = [0,1]$.}
\end{figure}

Figure 1 shows simulations of a recursive process 
and a Hawkes process over the same domain, with the same exponential triggering density, and the same background rate $\mu$. 
In the top panels, $\mu = 0.05$ and $\beta_t = 0.8$, and in the bottom panels, $\mu = 0.1$ and $\beta = 1$. In each case, the parameter $K$ of the Hawkes process was selected as $\kappa / (\mu + \kappa)$ so that the Hawkes and recursive processes would have the same expected number of points. 
One main difference between the Hawkes and recursive models is that the former exhibits occasional small clusters with
just a few or even just one isolated point, whereas the latter produces almost exclusively large clusters. 
One sees also how the parameter $\beta$ influences the degree of clustering in the processes.\\

Figure 2 shows the conditional intensity, $\lambda$, of a standard recursive process (\ref{oneover}), along with the corresponding 
conditional intensity of a Hawkes process (\ref{hawkes}) fit to the simulated recursive process by maximum likelihood.
One sees that the conditional intensity of the recursive process is higher following the initial point in a cluster, but the Hawkes conditional intensity becomes higher after several points have occurred in succession.\\

\section{Estimation.}

As with most space-time point process models including Hawkes and ETAS processes, the parameters in 
recursive point processes can be estimated by maximizing the loglikelihood,
\begin{eqnarray}
\ell(\theta) = \int \limits _S \log \lambda(s) dN(s) - \int \limits_S \lambda(s) d\mu,
\label{loglike}
\end{eqnarray}
where $\theta$ is the vector of parameters to be estimated.
Maximum likelihood estimates (MLEs) of the parameters in such point process models are consistent, asymptotically normal, and efficient (Ogata 1978).

Despite the recursive nature of the model, the loglikelihood of a recursive point process can be computed quite directly. 
For any given realization of points $\{\tau_1, \tau_2, ..., \tau_n\}$, 
given a particular value of the parameter vector $\theta$, 
$\lambda(\tau_1) = \mu$ so one can immediately compute $H\{\lambda(\tau_1)\} = H(\mu)$, 
and thus $\lambda(\tau_2) = \mu + H(\mu)g(\tau_2 - \tau_1)$. 
One therefore has 
$H\{\lambda(\tau_1)\} = H(\mu + H(\mu)g(\tau_2 - \tau_1)$, and one can continue in this fashion 
to compute $\lambda(\tau_i)$ for $i = 1, 2, ..., n$. 

The integral term $\int \limits_S \lambda(s) d\mu$ may readily be approximated in the standard way 
(see e.g.\ Schoenberg 2013).
Assuming $g(t)$ has negligible mass for $t > T-\tau_i$, one may invoke the approximation 
\begin{eqnarray*}
\int \limits _S \lambda(s) d\mu &=& \int \limits _0 ^T \{\mu + \int \limits _0 ^t H(\lambda(s)) g(t-s) dN(s)\} dt\\
&=& \mu T + \int \limits _0 ^T H(\lambda(s)) \int \limits _0 ^{T-s} g(u) dN(s) du\\
& \approx & \mu T + \int \limits _0 ^T H(\lambda(s)) dN(s)\\
& = & \mu T + \sum \limits _i H(\lambda(\tau_i)),
\end{eqnarray*}
which is trivial to compute.
The parameter vector $\theta$ maximizing the approximation of (\ref{loglike}) can then be estimated by standard Newton-Raphson optimization routines. In what follows, we use the function optim() in {\sl R}.
Approximate standard errors can be derived via the diagonal elements of the inverse of the Hessian of the loglikelihood
(Ogata 1978).\\

\section{Application to Measles in Los Angeles, California.}

\begin{figure}[hhh]
\includegraphics[width=6in, height = 2in]{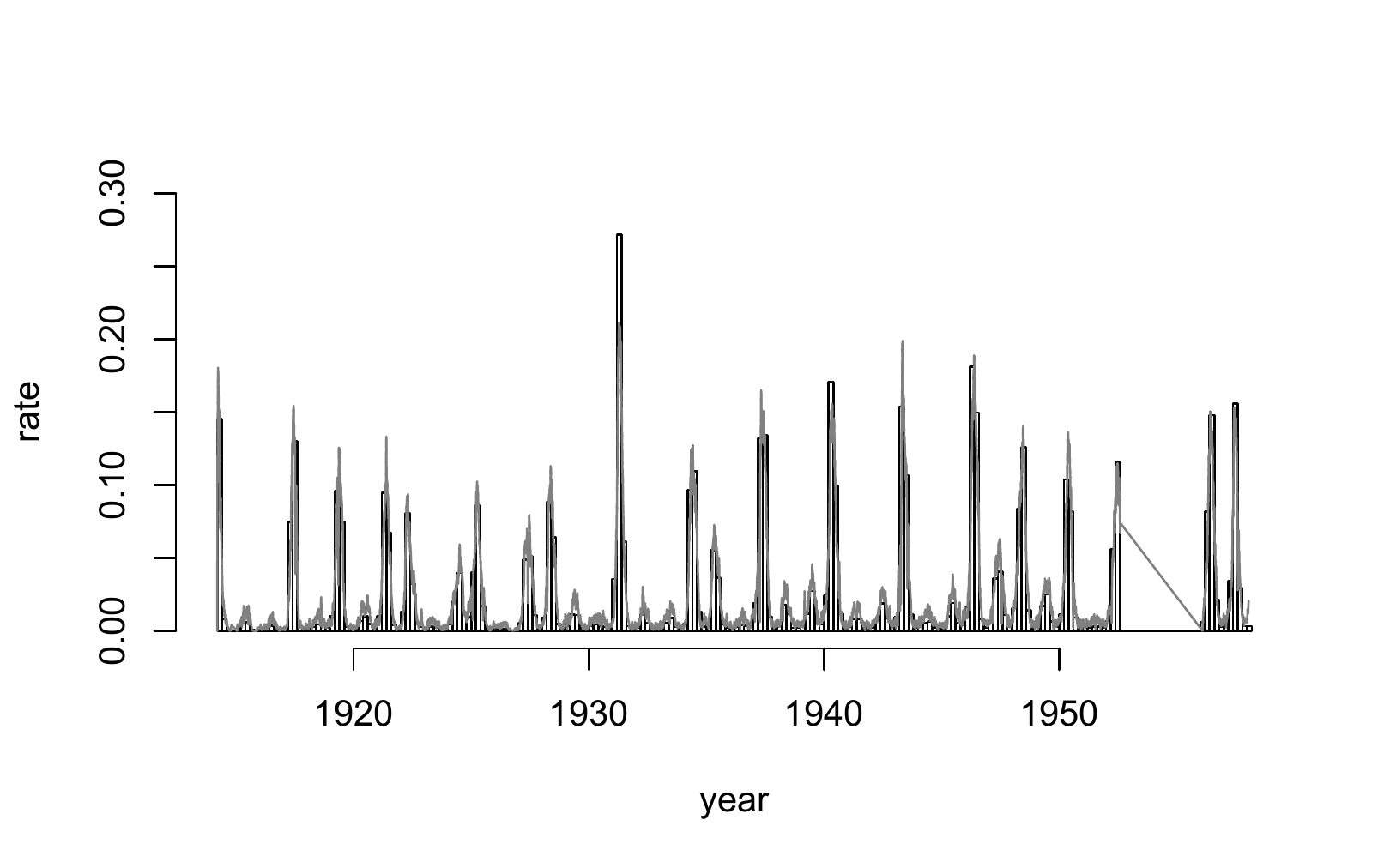}
\caption{Histogram (black bars) of measles cases in Los Angeles, California, from 1910-1956, along with the estimated rate 
of the recursive model (gray curve) with exponential triggering function, fit by maximum likelihood. }
\end{figure}

Recorded cases of measles in Los Angeles, California, from 1/1/1906 to 12/31/1956 were obtained from Project Tycho, 
www.tycho.pitt.edu (Van Panhuis et al., 2013). 
The data consist of weekly counts of confirmed cases of measles in Los Angeles published by the United States Centers of Disease Control (CDC) in its open access weekly Morbidity and Mortality Weekly Reports. For some weeks no information is available, especially in the years 1906-1909, so for this analysis we restrict our attention to the 148,037 recorded cases during the period from 2/5/1910 to 12/31/1956. 
Weeks with no data over this period were treated as having zero confirmed cases.  
Since the temporal resolution of the data is by week, the onset time for each individual case was drawn uniformly within each 
7 day time interval. 
Figure 3 shows a histogram of the cases, along with the estimated rate of the recursive model (\ref{rec2}) with exponential triggering fit to the data by maximum likelihood. 
The estimated parameters are $(\mu, \kappa, \beta, \alpha)$ = (3.907 points/yr, 27.06 triggered points/observed point, 60.01 points/yr, 0.3632), with corresponding standard error estimates $(.1494, .6569, .8681, .06997.)$.

One way to check for convergence of the MLE is to compute the ratio $\int \limits _0 ^T \hat \lambda(t) dt / N(0,T)$, which should be close to 1 since $E \int \limits _0 ^T \hat \lambda(t) dt = E \int \limits _0 ^T dN = N(0,T)$. In practice this ratio often assumes a value extremely close to 1 after fitting by maximum likelihood (Harte 2015). For the measles data, the ratio is 0.995601, and the loglikelihood is 1223183.\\ 

\begin{figure}[hhh]
\includegraphics[width=6in, height = 2in]{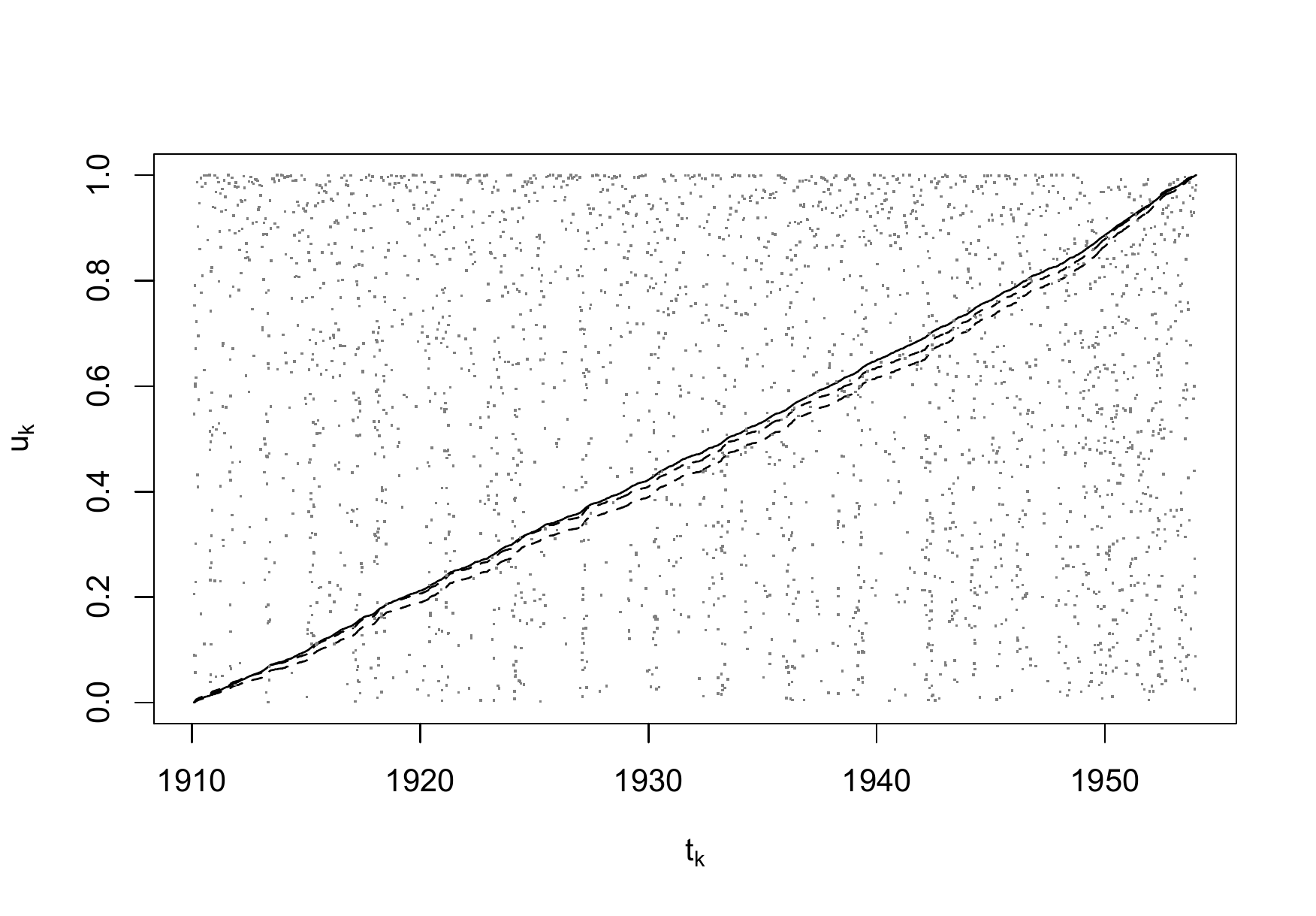}
\caption{Super-thinned residuals $t_k$ using $b = $ 100 points/year and their corresponding standardized interevent times $u_k$. The solid line shows, for each value of $t_k$, the normalized cumulative sum $\sum \limits _{i=1}^{k} u_i / \sum \limits _{i=1}^m u_i$, where $m$ is the number of super-thinned residuals. Dotted lines show lower and upper simultaneous 95\% confidence bounds based on 1000 simulations of the normalized cumulative sums of $m$ uniform random variables. } 
\end{figure}

In order to assess the fit of the model, we used super-thinned residuals (Clements et al.\ 2013). In super-thinning, one selects a constant $b$, thins the observations by keeping each observed point $\tau_i$ independently with probability $b/ \hat \lambda(\tau_i)$ if $\hat \lambda(\tau_i) > b$, 
and superposes points from a Poisson process with rate $(b - \hat \lambda) {\bf 1}_{\hat \lambda \leq b}$, where ${\bf 1}$ denotes the indicator function. The resulting super-thinned residuals form a homogeneous Poisson process with rate $b$ iff.\ $\hat \lambda$ is the true conditional rate of the observed point process (Clements et al.\ 2013). If $t_i$ are the times of the super-thinned points, one may consider the interevent times,
$r_i = t_{i} - t_{i-1}$ (with the convention $t_0 = 0$), which should be exponential with mean $1/b$ if the fitted model $\hat \lambda$ is correct, and it is natural therefore to inspect the uniformity of the standardized interevent times $u_i = F^{-1}(r_i)$, where $F$ is the cumulative distribution function of the exponential with mean $1/b$. 
Figure 4 shows the super-thinned residuals $t_i$ along with their corresponding standardized interevent times $u_i$, as well as the cumulative sum of the standardized intervent times, along with individual 95$\%$ confidence bounds based on 1000 simulations of an equivalent number of uniform random variables. The super-thinned residuals appear to be well scattered, though the model does not fit perfectly; there are more very large interevent times than expected, especially between 1925 and 1945, and the cumulative sum of the standardized interevent times is somewhat more concave than expected as a result. 
These largest interevent times appear to be somewhat clustered together, while the other interevent times appear to be largely well scattered, 
as shown in the lag plot of the standardized interevent times in Figure 5. \\

\begin{figure}[hhh]
\includegraphics[width=6in, height = 2in]{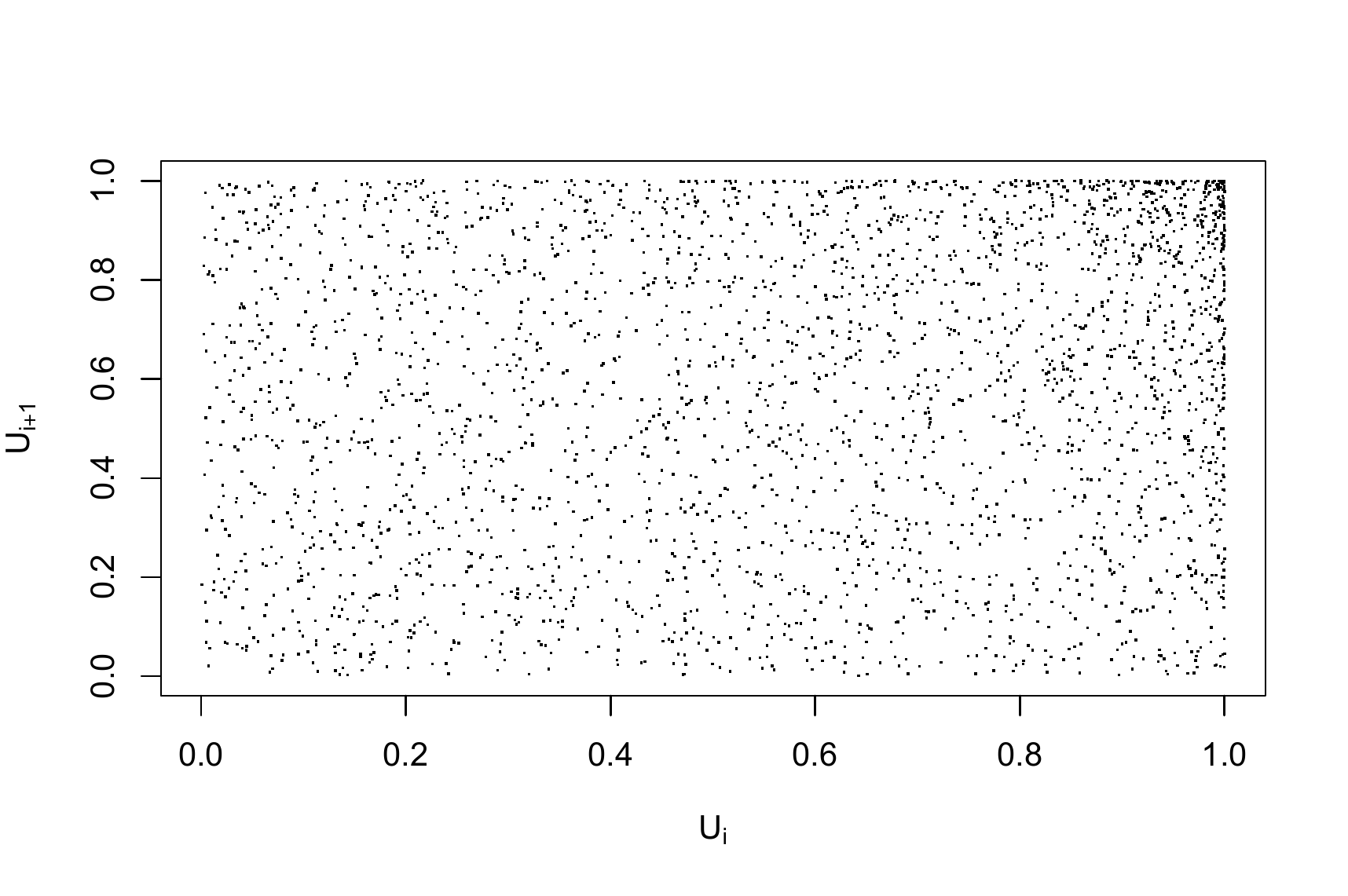}
\caption{Lag plot of the standardized interevent times $u_i$ of the super-thinned residuals using $b = $ 100 points/year. } 
\end{figure}

\begin{figure}[hhh]
\includegraphics[width=6in, height = 2in]{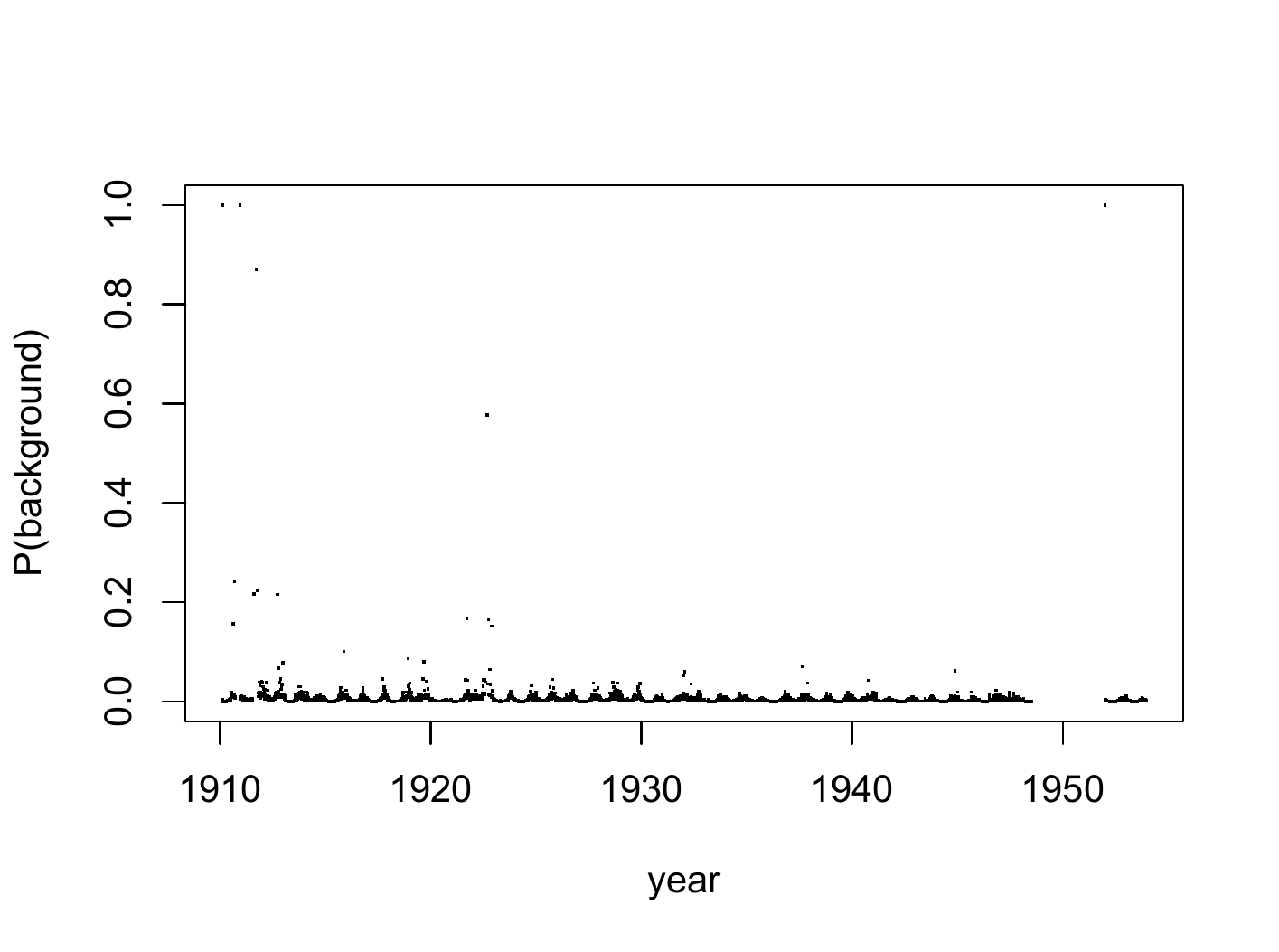}
\caption{Stochastic declustering of the Los Angeles measles cases based on the fitted model (\ref{rec2}). For each observed point $t_i$, the $y$-coordinate, $\mu / \lambda(t_i)$, is the probability, based on model (\ref{rec2}), that the point is attributed to the background rate ($\mu$) as opposed to contagion from previous points. } 
\end{figure}

Figure 6 shows the stochastic declustering of the measles data in Los Angeles using the fitted model (\ref{rec2}). 
The $y$-axis shows the probability, based on the fitted model (\ref{rec2}), that the point is attributed to the background rate ($\mu$) as opposed to contagion from previous points.
The vast majority of points are attributed to contagion rather than novel outbreaks. Certain particular points in 1910-1912, 1922, and 1952 are given much higher likelihood of being attributed to new outbreaks, though this is likely at least partially an artifact of missing data in preceding weeks.

\section{Concluding remarks.}
 
The recursive point process model proposed here and fit to infectious disease data seems to be an improvement over Hawkes models because of its more general form and its flexibility, enabling it to account for changes in the rate of contagion over the course of an epidemic. 
We should note that although Hawkes models are widely used in seismology and are occasionally called {\sl epidemic-type} models, and although the processes by which humans spread contagious diseases seem naturally to lend themselves to such models, the use of Hawkes models in describing the spread of infections has been scant. 
Exceptions are Becker (1977), who proposed purely temporal self-exciting point process models to describe the temporal spread of smallpox in Brazil, Farrington et al.\ (2003), who describe the effect of vaccinations on the spread of measles in the United States using self-exciting point process models, and Balderama et al.\ (2012), who model invasive red banana plant locations and times
using a parametric space-time Hawkes point process model. 
As noted by Law et al.\ (2009), unlike grid-based studies on area occupation, where the surface of study is divided into an array of pixels on a grid, spatial-temporal point processes can enable greater precision of forecasts in space and time, and can offer a more detailed and precise account of spatial heterogeneity and clustering.
Diggle (2006) investigated inhomogeneity in foot-and-mouth disease using spatial-temporal point process models estimated by partial likelihood methods, and Diggle (2014) discusses some successful uses of spatial-temporal point process modeling in describing in detail ecological phenomena such as the locations of Japanese black pine saplings as well as public health data such as liver cirrhosis in Northeastern England, but these efforts currently do not appear to have been widely replicated. 
Perhaps the added flexibility of the recursive model proposed here will facilitate the more frequent use of point process models for such epidemic data.

According to the fitted recursive model for measles in Los Angeles from 1910 to 1956, the vast majority of observed points were spread via contagion, with only a small fraction (0.121\% of cases, or 3.907 cases per year) attributable to new outbreaks. 
The fitted exponential triggering function in the recursive model had an estimated rate of 60.01 points/year, which corresponds to a mean triggering time of 6.08 days for each transmission. 
This estimate of ~6 days for each transmission is, given the epidemiology of the measles virus, plausible. For instance, CDC reports that communicability of measles can occur from 4 days before, to 4 days after, onset of symptomatic rash, and that rashes present on patients between 7-21 days after exposure (Centers for Disease Control and Prevention, 2015). 
Note that this estimate of contagion is based on when the measles cases were {\sl reported}; thus, the contagion suggested here within 6.08 days corresponds to cases being reported within 6.08 days of one another. This may differ from the actual delay times between subjects' contraction of the disease.
There may be numerous covariates, such as climate, geographical and geological variables for instance, that are omitted here yet may influence the relationship observed here between previously observed points and the rate of future points. 
The conditional intensity may nevertheless be consistently estimated in the absence of such information provided the impact of the missing covariates is suitable small, as shown in Schoenberg (2016). 

We have presented an extension of the Hawkes point process model, a recursive model, that allows for previous disease status to inform a flexible component describing the time intervals between contagious events. In the special case where the productivity is inversely proportional to the conditional intensity (i.e. when $\alpha = 0$), we have shown that this standard recursive model is computationally trivial to estimate, and does not require estimates of more complex parameters typically needed for accurate estimations of transmission events. We have demonstrated that these recursive models perform well on historical disease datasets, and can lead to insights into the transmission dynamics of particularly contagious diseases. These advances are particularly relevant given the recent outbreaks of such diseases in the same regions tested here, and will hopefully encourage informed strategies as to how best prevent and mitigate future outbreaks.




\section*{Acknowledgements.}
This material is based upon work supported by the National Science Foundation
under grant number DMS 1513657. 

Computations were performed in {\sl R} (www.r-project.org). Thanks to the U.S. CDC for supplying the data and to Project Tycho for making it so easily available. 

Thanks to UCLA and Paris for allowing Professor Schoenberg a one year sabbatical during which time this research was performed.

\pagebreak

\noindent {\sc \bf References}
\vskip 0.1in \parskip 1pt \parindent=1mm \def\reference{\hangindent=22pt\hangafter=1}

\reference






\reference
Bacry, E., Delattre, S., Hoffmann, M. and Muzy, J. F. (2013). Modeling microstructure noise with mutually exciting point processes. 
{\sl Quantitative Finance} {\bf 13}, 65-77.

\reference
Bacry, E., Mastromatteo, I., Muzy, J-F. (2015).
Hawkes processes in finance.
Market Microstructure and Liquidity Vol. 01, No. 01, 1550005.



\reference
Balderama, E., Schoenberg, F.P., Murray, E., and Rundel, P. W., 2012.
Application of branching models in the study of invasive species.
{\sl Journal of the American Statistical Association} 107(498), 467--476.


\reference
Becker, N., 1977.
Estimation for discrete time branching processes with application to epidemics. 
{\sl Biometrics} 33(3), 515-522.







\reference 
Centers for Disease Control and Prevention, 2015. 
{\sl Epidemiology and Prevention of Vaccine-Preventable Diseases}, 13th ed. 
Hamborsky J, Kroger A, Wolfe S, eds. Washington D.C., Public Health Foundation, Chapter 13, pp.\ 209-230.

\reference
Clements, R.A., Schoenberg, F.P., and Veen, A., 2013. 
Evaluation of space-time point process models using super-thinning. 
{\sl Environmetrics} {\bf 23}(7), 606--616. 



\reference
Daley, D., and Vere-Jones, D., 2003.
{\sl An Introduction to the Theory of Point Processes, Volume 1: Elementary Theory and Methods, 2nd ed.},
Springer: New York.

\reference
Daley, D., and Vere-Jones, D. 2007.
{\sl An Introduction to the Theory of Point Processes, Volume 2: General Theory and Structure, 2nd ed.}
Springer: New York.


\reference
Delattre, S., Fournier, N. and Hoffmann, M. 2016.
Hawkes processes on large networks.
{\sl Annals of Applied Probability} 26, 216--261. 

\reference
Diggle, P.J., 2006.
Spatio-temporal point processes, partial likelihood, foot-and-mouth.
{\sl Statistical Methods in Medical Research} 15, 325--336.

\reference 
Diggle, P.J., 2014.
{\sl Statistical Analysis of Spatial and Spatio-temporal Point Patterns,} 3rd ed.
CRC Press, Boca Raton.



\reference
Farrington, C.P., Kanaan, M.N., and Gay, N.J., 2003.
Branching process models for surveillance of infectious diseases controlled by mass vaccination.
{\sl Biostatistics} 4(2), 279--295.



\reference
Georgii, H.O. (1976). 
Canonical and grand canonical Gibbs states for
continuum systems. 
{\sl Communications of Mathematical Physics} 48, 31?51.



\reference
Harte, D.S., 2014. 
An ETAS model with varying productivity rates. 
{\sl Geophysical Journal International} {\bf 198}(1), 270-284. 

\reference
Harte, D.S., 2015.
Log-likelihood of earthquake models: evaluation of models and forecasts.
{\sl Geophysical Journal International} {\bf 201} (2), 711-723.

\reference
Hawkes, A. G., 1971.
Point spectra of some mutually exciting point processes,
{\sl J. Roy. Statist. Soc.}, {\bf B33}, 438-443.

\reference
Law, R., Illian, J., Burslem, D.F.R.P., Gratzer, G., Gunatilleke, C.V.S., and Gunatilleke, I.A.U.N., 2009.
Ecological information from spatial patterns of plants: insights from point process theory.
{\sl Journal of Ecology}, 97(4), 616--628.


\reference
Lewis, P.A.W., and Shedler, G.S. (1979).
Simulation of non-homogeneous Poisson processes by thinning.
{\sl Naval Res. Logistics Quart.} 26(3), 403-413.



\reference
Marsan, D., and Lenglin{\'e}, O., 2008.
Extending earthquakes' reach through cascading.
{\sl Science} 319(5866), 1076--1079.





\reference
Nguyen, X. and Zessin, H. (1979). 
Integral and differential characterizations
of Gibbs processes. 
{\sl Mathematische Nachrichten} 88, 105-115.

\reference
Ogata, Y., 1978. 
The asymptotic behavious of maximum likelihood estimators for stationary point processes.
{\sl Annals of the Institute of Statistical Mathematics}, 30, 243--261.

\reference
Ogata, Y., 1988.
Statistical models for earthquake occurrence and residual
analysis for point processes,
{\sl J.\ Amer.\ Statist.\ Assoc.}, {\bf 83}, 9-27.

\reference
Ogata, Y., 1998.
Space-time point-process models for earthquake occurrences,
{\sl Ann.\ Inst.\ Statist.\ Math.}, {\bf 50}(2), 379-402.

\reference
Ogata, Y., 2004. 
Space-time model for regional seismicity and detection of
crustal stress changes, 
{\sl J. Geophys. Res.} {\bf 109}(B3), B03308, 1-16.

\reference
Ogata, Y., Katsura, K., and Tanemura, M., 2003.
Modelling heterogeneous spacetime
occurrences of earthquakes and its residual analysis, 
{\sl Applied Statistics} {\bf 52}, 499-509.








\reference 
Schoenberg, F.P. 2006. 
On non-simple marked point processes. 
{\sl Ann. Inst. Stat. Math.} {\bf 58}(2), 223-233.

\reference
Schoenberg, F.P., 2013. 
 Facilitated estimation of ETAS. 
 {\sl Bull.\ Seism.\ Soc.\ Amer.} 103(1), 601--605. 

\reference 
Schoenberg, F.P., 2016.
A note on the consistent estimation of spatial-temporal point process parameters. 
{\sl Statistica Sinica} 26, 861-789.

\reference
van Panhuis, W.G., Grefenstette, J., Jung, S.Y., Chok, N.S., Cross, A., Eng, H., Lee, B.Y., Zadorozhny, V., Brown, S., Cummings, D., and Burke, D.S. (2013). 
Contagious diseases in the United States from 1888 to the present. 
{\sl NEJM} 369(22), 2152-2158.

\end{document}